\newtheorem*{corollary}{Corollary}
\newtheorem*{unnumberedtheorem}{Theorem}
\begin{document}

\title{A Proof for Poisson Bracket in Non-commutative Algebra of Quantum Mechanics}

\author{Sina Khorasani\\
Sharif University of Technology, Tehran, Iran\\
Email: khorasani@sina.sharif.edu}

\maketitle

\begin{abstract}
The widely accepted approach to the foundation of quantum mechanics is that the Poisson bracket, governing the non-commutative algebra of operators, is taken as a postulate with no underlying physics. In this manuscript, it is shown that this postulation is in fact unnecessary and may be replaced by a few deeper concepts, which ultimately lead to the derivation of Poisson bracket. One would only need to use Fourier transform pairs and Kramers-Kronig identities in the complex domain. We present a definition of Hermitian time-operator and discuss some of its basic properties.

 PACS: 03.65.-w, 03.65.Aa, 03.65.Ta
\end{abstract}

\section{Introduction} \label{Sec1}

The mathematical foundation of quantum mechanics is based on a few postulates, from which all the relationships and subsequent equations of quantum mechanics are obtained. A rigorous approach to this has been presented in famous textbooks by Sakurai \cite{Sakurai} and Schleich \cite{Schleich}, where operator algebra in Hilbert space, consisting of operators acting on kets and bras generates eigenket problems. The process of solving for Hermitian operator eigenkets and eigenvalues would constitute the whole mathematical toolbox of whatever needed to be known in quantum mechanics.

The idea of an algebraic operator of time is not new in the context of quantum mechanics. While in the early years physicist were trying to explore the implications of constructing a self-adjoint and Hermitian time operator, Wolfgang Pauli \cite{Pauli} was the first to point out that it was impossible to have such an operator of time. He did not present a well-founded proof, however, and his reason was that the time needed to be a continuous variable while energy eigenstates must be bounded from below \cite{Dirac}. This viewpoint has been followed in many of the later studies which have been discussed in several recent reviews of this subject \cite{Busch, PashbyThesis, Pashby}. It has been mostly believed since then that the time should be a real-valued parameter \cite{Aharanov, Durr, Cohen} and could not be expressed as a self-adjoint operator of any form. However, as it has been pointed out \cite{Grabowski} through presentation of a counter-example, this viewpoint is mathematically incorrect.

Despite all these reluctances to the existence of a time operator, lots of efforts have been put into deconstruction of this concept \cite{Yang, Garrison, Srinivas, Albanese, Olkhovsky}. Most recently, a mathematically consistent framework has been presented \cite{Pranovic} which adopts a self-adjoint and Hermitian operator of time. However, the ordinary Hamiltonian is not allowed to be a conjugate of time, again based on the assumption that time had to be a continuous variable while energy eigenstates should be bounded from the below. 

Last but not least, is that a discontinuous time variable with a discrete nature has a surprisingly interesting physical implication. As it has been shown by Golden \cite{Golden}, this would result in a fully-determinable dynamical theory of strictly-irreversible evolution, in which the conventional description of quantum-mechanical systems is maintained. Although a recent experiment \cite{Batalhao, Auffeves} has demonstrated time reversibility in precession of nuclear spin, it may be quite well because of the chaotic nature of the evolution of entangled systems which has been observed and reported \cite{Alidoosty1, Alidoosty2} elsewhere, in the sense that the final ending state of a quantum system is ultrasensitive to the initial conditions. Hence, the initial state may not be simply restored by preparation of initial conditions because of this chaotic nature. In that case, the true essence of time reversibility would be still preserved. These remain as open and unanswered questions which require further investigations in future.

\section{Definitions} \label{Sec2}

\subsection{Postulates} \label{Sec2.1}

\textit{Definition 1: Postulates of Quantum Mechanics.} All of the existing frameworks in which the quantum mechanics is founded are currently based on a few common posulates. These postulates may be enlisted as:

\begin{enumerate}
\item{\textit{Every physical state of a closed system may be uniquely represented by a normalized ket such as $|\psi\rangle$. Furthermore, system kets have one-to-one correspondence to their Hermitian adjoints, namely state bras such as $\langle\psi|$.}}
\item{\textit{To every physically measurable quantity, there corresponds exactly one Hermitian operator, which constitutes a Hilbert space with the space spanned by kets or bras.}}
\item{\textit{Squared absolute values of the inner product of state kets such as $|\psi\rangle$ with bras such as $\langle\phi|$, denoted by the squared bracket $|\langle\phi|\psi\rangle|^{2}$ gives a direct value of the probablity density function of the system projection on the state $|\phi\rangle$.}}
\item{\textit{The one-dimensional position and momentum operators along any direction do not commute, resulting in the Poisson bracket given by the expression $[\mathbb{X},\mathbb{P}]=i\hbar$.}}
\end{enumerate}

Out of these above mentioned postulates, with two more related to the measurement and collapse of state kets, the mathematical theory of quantum mechanics is born. Although the collapse of state kets may be well avoided by incorporation of multi-world theory \cite{Hall}, still an alternate postulate will be needed to justify the presence of parallel worlds.

As it appears in the above, the present refinement of the foundations of quantum mechanics requires the fourth postulate to suddenly come out of nowhere, with no physical justification, except if one is set to feed the corresponding Poisson bracket in classical mechanics from the behind scene into the quantum picture\cite{Pranovic}. It has been argued \cite{Messiah} that the commutation stated in the fourth postulate is a consequence of the operators associated with the corresponding observables, and thus a consequence of experiments. Although this relationship has been validated in practice on every experimental scale, and serves as the main basis of quantum mechanics, one would hardly accept its usefulness and correctness. It has been therefore an undeniable commonsense that there must be a more fundamental pillar alternative to this fourth postulate.

Recently, Bars and Rychkov \cite{Bars} have demonstrated that the Poisson bracket may be indeed obtained if we start with the mathematical behind dissociation and reconnection of strings in the much deeper universal picture of string theory. However, as it is being discussed in this paper, this level of complication is totally unnecessary, and the non-commutative algebra of position and momentum may be conveniently drawn from system evolution equation without need to worlds being in parallel or having tinier scales. To this end, all we need  to study the equivalence of wavefunctions will be the Fourier and the Hilbert transform pairs, the latter being used in the form of Kramers-Kronig relationships. It has to be further added that a connection between discrete Fourier transform and uncertainty relations for use in signal processing applications has been noticed in the past \cite{Massar}.
 
\subsection{Evolution of Physical Systems} \label{Sec2.2}

\textit{Definition 2: Evolution Equation \& Universal Parameter.} Upon projection unto a given $\zeta$ space (Appendix A), and in all non-relativistic and relativistic descriptions of physical systems, a governing differential equation appears in the general form

\begin{equation}
\label{eq1}
\mathcal{G}\psi \left( \zeta  \right) = \frac{\partial }
{{\partial \zeta }}\psi \left( \zeta  \right)
\end{equation}

\noindent
where $\mathcal{G}$ is an operator, $\psi \left( \zeta  \right)$ is the (scalar or vector) wavefunction, and $\zeta$ is a universal parameter, to which all other parameters and operators ultimately depend. Hence, the infinitesimal evolution of any system depending on the universal parameter $s$ is equivalent to operation of the operator $\mathcal{G}$ on its system wavefunction $\psi \left( \zeta  \right)$. Here, we do not discuss the nature of wavefunctions as these issues are still a matter of strong debate \cite{Hall,Wei}. However, the universal parameter $s$ may be regarded as a parameter such as time $t$, or any other physically measurable quantity such as the scalar quantities energy $e$, phase $\theta$, or vector quantities such as position $\mathbf{r}$ and momentum $\mathbf{p}$. With the exception of time $t$, which is shown to provide a truly a
ic behavior under interactions \cite{Barbour}, the choice of all other parameters would need a separate study. However, a single-particle system as being discussed in this paper, should experience symmetric time directions.\\

\noindent
\textit{Definition 3: Hermitian Quantum Evolution.} The evolution equation (\ref{eq1}) may be slightly modified to reach the Hermitian Quantum Evolution equation as  

\begin{equation}
\label{eq2}
\mathcal{G}(\zeta) \psi \left( \zeta  \right) = i\hbar\frac{\partial }
{{\partial \zeta }}\psi \left( \zeta  \right)
\end{equation}

\noindent
Here, the operator $\mathcal{G}$ is furthermore allowed to be a function of the universal parameter. Also, the unit imaginary number $i$ is inserted to preserve the Hermitian property of the derivative on the right-hand-side of (\ref{eq1}); this conforms to the well known principle of time invariance, too. Following the approach in \cite{Khorasani,Bars}, we also have inserted a dimensional constant which is denoted here by $\hbar$ with the units of Joule $\cdot$ Second to obtain the general form of the evolution equation matching the one generally used in quantum mechanics. 

The constant $\hbar$ will provide the necessary dimensional correction, in such a way that the dimension of the operator $\mathcal{G}$ multiplied by the dimension of parameter $\zeta$ would result in Joule $\cdot$ Second as well. The numerical value of the constant $\hbar$ in (\ref{eq2}) may be, however, later obtained by experiment, logically turning out to be the same as the measurable Planck's constant $h$ divided by $2\pi$. As it is known, the zero limit of this constant will reproduce the classical physics in the end.

Since the universal parameter $\zeta$ is supposed to be a truly independent variable, all other quantities may be assumed to vary as sole functions of $\zeta$. Hence, the dynamics of a physical system will be described by the variation of its quantities with respect to the variations of $\zeta$. At this point, we will have to take on the fourth postulate, replacing the one in the above as follows.\\

\noindent
\textit{Definition 4: Fourth Postulate.} This definition below may replace the fourth postulate based on the Possion bracket as in the above \S \ref{Sec2.1}.

\begin{itemize}
\renewcommand{\labelitemi}{$4.$}
\item{\textit{The system state functions always obey a form of the Evolution equation (\ref{eq2}).}}
\end{itemize}

When taken as the universal parameter, time in the classical physics is only a fourth dimension of the space-time, albeit shown to have a preferred direction \cite{Barbour}; this was discussed in  \S \ref{Sec1}. Even in the general and special theories of relativity, time coordinate remains to be a mere fourth dimension \cite{Eckstein, KhorasaniSPIE}. Hence, the classical systems may be or not be dependent on the time. Quantum mechanics, however, provides a different iterpretation of time. It exceptionally is an independent parameter, to which no Hermitian operator is assigned; time is only a scalar free parameter. This is while all other physical quantities (c.f. third postulate) are expressible as functions of this free parameter, namely time $t$.\\

\noindent
\textit{Definition 5: Schr\"{o}dinger Equation.} Upon taking $\zeta=t$ with the dimension of Second, we may denote $\mathcal{G}$ by the Hermitian energy operator $\mathcal{H}$, having the dimension of Joule, thus arriving at the Schr\"{o}dinger equation as

\begin{equation}
\label{eq3}
\mathcal{H}(t) \psi \left( t \right) = +i\hbar \frac{\partial }
{{\partial t}}\psi \left( t \right)
\end{equation}

\noindent
\textit{Definition 6: Position and Momentum Operators.} Had we taken a vector parameter such as three-dimensional momentum $\mathbf{p}$ with the dimension of Kilogram $\cdot$ Meter $\cdot$ Second$^{-1}$ instead of the universal parameter $\zeta$, we had arrived at the vector equation

\begin{equation}
\label{eq4}
\mathcal{R}\psi \left( {\mathbf{p}} \right) =  + i\hbar \frac{\partial }
{{\partial {\mathbf{p}}}}\psi \left( {\mathbf{p}} \right)
\end{equation}

\noindent
Similarly, one would obtain the following upon taking the three-dimensional position $\mathbf{r}$ with the dimension of Meter instead of the universal parameter $\zeta$

\begin{equation}
\label{eq5}
\mathcal{P}\chi \left( {\mathbf{r}} \right) =  - i\hbar \frac{\partial }
{{\partial {\mathbf{r}}}}\chi \left( {\mathbf{r}} \right)
\end{equation}

\noindent
In (\ref{eq4}) and (\ref{eq5}), $\mathcal{R}$ and $\mathcal{P}$ are respectively the Hermitian vector operators corresponding to position and momentum in space.

Although, $\chi(\mathbf{p})$ in (\ref{eq4}) and $\psi(\mathbf{r})$ in (\ref{eq5}) express identical systems, they are not necessarily equal, too. That would mean in general that $\chi(\mathbf{r})\neq\psi(\mathbf{r})$ and $\chi(\mathbf{p})\neq\psi(\mathbf{p})$. We refer to $\chi(\mathbf{p})$ and $\psi(\mathbf{r})$ respectively as the momentum and position representations of the system. The reason for taking the negative sign on the right-hand-side of (\ref{eq5}) will become apparent later. \\

\noindent
\textit{Definition 7: Fourier Transforms.} Assuming satisfaction of sufficient condition for existence of Fourier transform (absolute integrability), we may define the three-dimensional Fourier transform

\begin{eqnarray}
\label{eq6}
\Psi \left( {\mathbf{r}} \right)&& = \mathcal{F}\left\{ {\psi \left( {\mathbf{p}} \right)} \right\}\left( {\mathbf{r}} \right)  \\
\nonumber
&&
=\frac{1}
{{\left( {2\pi \hbar } \right)^{\tfrac{3}
{2}} }}\iiint {\psi \left( {\mathbf{p}} \right)\exp \left( { + \frac{i}
{\hbar }{\mathbf{r}} \cdot {\mathbf{p}}} \right)d^3 p}
\end{eqnarray}

\noindent
having the inverse transform given by

\begin{eqnarray}
\label{eq7}
\psi \left( {\mathbf{p}} \right) && = \mathcal{F}^{ - 1} \left\{ {\Psi \left( {\mathbf{r}} \right)} \right\}\left( {\mathbf{p}} \right) \\
\nonumber
&&= \frac{1}
{{\left( {2\pi \hbar } \right)^{\tfrac{3}
{2}} }}\iiint {\Psi \left( {\mathbf{r}} \right)\exp \left( { - \frac{i}
{\hbar }{\mathbf{p}} \cdot {\mathbf{r}}} \right)d^3 r}
\end{eqnarray}

\noindent
Now, opon taking Fourier transform from both sides of (\ref{eq4}) we arrive at

\begin{eqnarray}
\label{eq8}
\mathcal{F}&&\left\{ \mathcal{R}\psi \left( \mathbf{p} \right) \right\} \left( \mathbf{r} \right) \\
\nonumber
&&
= \frac{i\hbar }
{\left( {2\pi \hbar } \right)^{\tfrac{3}{2}} } \iiint \left[ \frac{\partial }
{\partial \mathbf{p}}\psi \left( \mathbf{p} \right) \right] \exp \left(  + \frac{i}{\hbar}
\mathbf{r} \cdot \mathbf{p} \right)d^3 p
\end{eqnarray}

\noindent
which through part-by-part integration gives

\begin{eqnarray}
\label{eq9}
\nonumber
\mathcal{F}&&\left\{ {\mathcal{R}\psi \left( {\mathbf{p}} \right)} \right\}\left( {\mathbf{r}} \right) = \frac{{i\hbar }}
{{\left( {2\pi \hbar } \right)^{\tfrac{3}
{2}} }}\left. {\psi \left( {\mathbf{p}} \right)\exp \left( { + \frac{i}
{\hbar }{\mathbf{r}} \cdot {\mathbf{p}}} \right)} \right|_{p =  - \infty }^{p =  + \infty }
\\ \nonumber &&
 - \frac{{i\hbar }}
{{\left( {2\pi \hbar } \right)^{\tfrac{3}
{2}} }}\iiint {\psi \left( {\mathbf{p}} \right)\left[ {\frac{\partial }
{{\partial {\mathbf{p}}}}\exp \left( { + \frac{i}
{\hbar }{\mathbf{r}} \cdot {\mathbf{p}}} \right)} \right]d^3 p}
\\ \nonumber &&
= \frac{{\mathbf{r}}}
{{\left( {2\pi \hbar } \right)^{\tfrac{3}
{2}} }}\iiint {\psi \left( {\mathbf{p}} \right)\exp \left( { + \frac{i}
{\hbar }{\mathbf{r}} \cdot {\mathbf{p}}} \right)d^3 p}
\\  &&
= {\mathbf{r}}\mathcal{F}\left\{ {\psi \left( {\mathbf{p}} \right)} \right\}\left( {\mathbf{r}} \right)
\end{eqnarray}

Now, to be discussed below, it is expected that (\ref{eq4}) and (\ref{eq5}) would describe identical systems. What we therefore require to know is that

\begin{equation}
\label{eq10}
\Psi \left( {\mathbf{r}} \right) = \mathcal{F}\left\{ {\psi \left( {\mathbf{p}} \right)} \right\}\left( {\mathbf{r}} \right) = \chi \left( {\mathbf{r}} \right)
\end{equation}

\noindent
Similarly, we would expect the inverse relationship to hold true

\begin{eqnarray}
\label{eq11}
\nonumber
\mathcal{F}^{ - 1}&& \left\{ {\mathcal{P}\chi \left( {\mathbf{r}} \right)} \right\}\left( {\mathbf{p}} \right) = \frac{{ - i\hbar }}
{{\left( {2\pi \hbar } \right)^{\tfrac{3}
{2}} }}\left. {\chi \left( {\mathbf{r}} \right)\exp \left( { - \frac{i}
{\hbar }{\mathbf{p}} \cdot {\mathbf{r}}} \right)} \right|_{r =  - \infty }^{r =  + \infty }   \\
\nonumber
&& + \frac{{i\hbar }}
{{\left( {2\pi \hbar } \right)^{\tfrac{3}
{2}} }}\iiint {\chi \left( {\mathbf{r}} \right)\left[ {\frac{\partial }
{{\partial {\mathbf{r}}}}\exp \left( { - \frac{i}
{\hbar }{\mathbf{p}} \cdot {\mathbf{r}}} \right)} \right]d^3 p}  \\
\nonumber
&& = \frac{{\mathbf{p}}}
{{\left( {2\pi \hbar } \right)^{\tfrac{3}
{2}} }}\iiint {\chi \left( {\mathbf{r}} \right)\exp \left( { - \frac{i}
{\hbar }{\mathbf{p}} \cdot {\mathbf{r}}} \right)d^3 p}  \\
&& = {\mathbf{p}}\mathcal{F}^{ - {\text{1}}} \left\{ {\chi \left( {\mathbf{r}} \right)} \right\}\left( {\mathbf{p}} \right)
\end{eqnarray}

\noindent
and therefore

\begin{equation}
\label{eq12}
{\rm X}\left( {\mathbf{p}} \right) = \mathcal{F}^{ - 1} \left\{ {\chi \left( {\mathbf{r}} \right)} \right\}\left( {\mathbf{p}} \right) = \psi \left( {\mathbf{p}} \right)
\end{equation}

Of principal importance, now, is to know whether $\Psi(\mathbf{r})=\chi(\mathbf{r})$ and ${\rm X}(\mathbf{p})=\psi(\mathbf{p})$ hold necessarily true or not. Counter-intuitively, it is not obvious that these two equations actually are correct. Similarly, we may assign an operator to time such as $\mathcal{T}$ and reconfigure the Schr\"{o}dinger equation (\ref{eq3}) in the alternate energy representation form as

\begin{equation}
\label{eq13}
\mathcal{T}(e) \chi \left( e \right) = -i\hbar \frac{\partial }
{{\partial e}}\chi \left( e \right)
\end{equation}

\noindent
\textit{Definition 8: Conjugate Variables.} Following (\ref{eq4}) and (\ref{eq5}), the system vector parameters $\mathbf{p}$ and $\mathbf{r}$ are here defined to be \textit{conjugate variables}. Similarly, following (\ref{eq13}) $e$ and $t$ now obviously again form a pair of \textit{conjugate variables}. Therefore, $\mathbf{p}$ and $\mathbf{r}$ as well as $\psi(t)$ and $\chi(e)$ should be connected through Fourier transforms.

We furthermore notice the negative sign on the right-hand-side of (\ref{eq13}) in comparison to (\ref{eq3}). 

\subsection{Fourth Postulate} \label{Sec2.3}

What allows us at this stage to proceed is, in fact, the assumption of a new amendment to the fourth postulate as follows.\\

\noindent
\textit{Definition 9: Amendment to the Fourth Postulate.} The fourth postulate of quantum mechanics links the evolution equation to the existence of conjugate pairs of physical observables.

\begin{itemize}
\renewcommand{\labelitemi}{$4.$}
\item{\textit{Physical systems may be described by either of the mathematically equivalent conjugate forms of the Evolution equation (\ref{eq2}), related through Fourier transform pairs. Conjugate forms should ultimately result in identical probablity density functions}}
\end{itemize}

\noindent
Having that said in the above, we must have at least $|\Psi(\mathbf{r})|=|\chi(\mathbf{r})|$ and $|{\rm X}(\mathbf{p})=\psi(\mathbf{p})|$. Equivalence of phases, comes following the Kramers-Kronig relationships. To verify this, it is sufficient that we assume that state functions are analytic in complex domain. Then, their natural logarithms given by $\ln\Psi(\mathbf{r})=|\Psi(\mathbf{r})|+i\angle\Psi(\mathbf{r})+i 2\pi n$ and  $\ln\chi(\mathbf{r})=|\chi(\mathbf{r})|+i\angle\chi(\mathbf{r})+i 2\pi m$ should also be analytic, where $m,n\in\textsf{Z}$ are some arbitrary integers. Kramers-Kronig relations \cite{Arfken} require that

\begin{eqnarray}
\label{eq14}
\Im{ f(z)} =  - \frac{1}{\pi }
\textrm{P.V.}\int\limits_{ - \infty }^{ + \infty } \frac{\Re f(u)}{u - z}du
\\ \nonumber
\Re{ f(z)} =  + \frac{1}{\pi }
\textrm{P.V.}\int\limits_{ - \infty }^{ + \infty } \frac{\Im f(u)}{u - z}du
\end{eqnarray}

\noindent
where $\textrm{P.V.}$ denotes Cauchy principal values. The first of the above pair is sufficient to observe that $\angle\Psi(\mathbf{r})-\angle\chi(\mathbf{r})=2\pi q$ with $q\in\textsf{Z}$ being some integer. Regardless of the choice of $q$, we would readily have (\ref{eq10}) established. Similarly, (\ref{eq12}) follows in the momentum representation.

\section{Poisson Bracket} \label{Sec3}

\textit{Definition 10: Commutator.} In order to study the commutative properties of any pair of two physical quantities, we may define the commutator of two operators such $\mathcal{A}$ and $\mathcal{B}$ as

\begin{equation}
\label{eq15}
[\mathcal{A},\mathcal{B}]\equiv\mathcal{A}\mathcal{B}-\mathcal{B}\mathcal{A}
\end{equation}

\noindent
A zero commutator implies insignificance in the order of measurements, or compatibility of observations. On the contrary, the order of measurements of two quantities will become important when the commutator is non-zero. \\

\noindent
\textit{Definition 11. Piosson Bracket.} For the particular choice of vector operators for position $\mathcal{R}$ and momentum $\mathcal{P}$, their commutator $[\mathcal{R},\mathcal{P}]$ will be referred to as the \textit{Poisson Bracket}. We may notice here that the Poisson bracket when evaluated on vector position $\mathcal{R}$ and momentum $\mathcal{P}$ operators gives out actually a tensor product value.

We assume that the individual components of position and momentum do commute, that is

\begin{eqnarray}
\label{eq16}
[\mathcal{R}_m,\mathcal{R}_n]=0
\\
\nonumber
[\mathcal{P}_m,\mathcal{P}_n]=0
\end{eqnarray}

\noindent
where $m,n=x,y,z$. It should be emphasized that these assumptions (\ref{eq16}) correspond to the case of simple vacuum in the absence of magnetic field and any spatial lattice, which admit vanishing commutators between the same space and momentum coordinates. Otherwise, one may need to take $[\mathcal{R}_m,\mathcal{R}_n]=i\hbar\delta_{mn}\eta$ and $[\mathcal{P}_m,\mathcal{P}_n]=i\hbar\delta_{mn}\beta$ where $\eta$ and $\beta$ are independent real-valued constants \cite{Castellani, Hatzinikitas, Delduc} proportional to the sqaure Planck length and magnetic field, respectively; such discussion is beyond the scope of this paper.

\begin{unnumberedtheorem}
The Possion bracket in quantum mechanics equals to the unit imagniary number $i$ times the reduced Planck constant. 
\end{unnumberedtheorem}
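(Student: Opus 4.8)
The plan is to work entirely within one consistent representation --- say the position representation --- where both operators $\mathcal{R}$ and $\mathcal{P}$ already have explicit forms, and then to evaluate the commutator by direct differentiation. The groundwork has in fact already been laid: equation~(\ref{eq9}) together with the identification~(\ref{eq10}) shows that, after passing to the position picture, the position operator $\mathcal{R}$ acts on a wavefunction $\chi(\mathbf{r})$ simply as multiplication by the coordinate $\mathbf{r}$, while equation~(\ref{eq5}) states directly that $\mathcal{P}$ acts as $-i\hbar\,\partial/\partial\mathbf{r}$. With both operators expressed in the same representation, the Poisson bracket of Definition~11 becomes a concrete differential expression rather than an abstract algebraic one.

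First I would apply the two operator forms to an arbitrary, sufficiently smooth and absolutely integrable test function $\chi(\mathbf{r})$ and write out both orderings for a fixed pair of Cartesian components $m,n\in\{x,y,z\}$:
\[
\mathcal{R}_m\mathcal{P}_n\,\chi(\mathbf{r}) = -i\hbar\, r_m\frac{\partial}{\partial r_n}\chi(\mathbf{r}), \qquad
\mathcal{P}_n\mathcal{R}_m\,\chi(\mathbf{r}) = -i\hbar\frac{\partial}{\partial r_n}\bigl[\,r_m\,\chi(\mathbf{r})\,\bigr].
\]
The second expression is then expanded with the Leibniz rule, producing a term $-i\hbar\,\delta_{mn}\chi(\mathbf{r})$ together with a term $-i\hbar\, r_m\,\partial\chi/\partial r_n$ identical to the first ordering. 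Subtracting, the derivative terms cancel and one is left with $[\mathcal{R}_m,\mathcal{P}_n]\chi(\mathbf{r}) = i\hbar\,\delta_{mn}\chi(\mathbf{r})$. Since $\chi$ is arbitrary this yields the operator identity $[\mathcal{R}_m,\mathcal{P}_n]=i\hbar\,\delta_{mn}$, whose diagonal components reduce to the claimed scalar value $i\hbar$; the off-diagonal vanishing is consistent with the commuting assumptions~(\ref{eq16}), so that the full tensor-valued bracket is $i\hbar$ times the identity.

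The delicate part is not the differentiation --- that is a one-line Leibniz computation --- but rather justifying that the multiplicative form of $\mathcal{R}$ and the differential form of $\mathcal{P}$ may legitimately be used \emph{in the same representation}. This is precisely the content of the equivalences $\Psi(\mathbf{r})=\chi(\mathbf{r})$ and $\mathrm{X}(\mathbf{p})=\psi(\mathbf{p})$, which rest on the Fourier pairs~(\ref{eq6})--(\ref{eq7}) and on the Kramers--Kronig argument~(\ref{eq14}) that equates the phases of the two conjugate descriptions. I would therefore devote the main effort to confirming that the boundary contributions in the integrations by parts of~(\ref{eq9}) and~(\ref{eq11}) vanish under absolute integrability, and that the analyticity hypothesis underlying~(\ref{eq14}) indeed forces $\angle\Psi(\mathbf{r})-\angle\chi(\mathbf{r})$ to be an integer multiple of $2\pi$. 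Only once this consistency is secured are the two operator realizations genuinely actions of one and the same pair of operators, so that the commutator computation carries unambiguous meaning and the result $[\mathcal{R},\mathcal{P}]=i\hbar$ follows without any appeal to the discarded fourth postulate.
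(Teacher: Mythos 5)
Your proposal is correct and follows essentially the same route as the paper: represent $\mathcal{R}$ as multiplication by $\mathbf{r}$ and $\mathcal{P}$ as $-i\hbar\,\partial/\partial\mathbf{r}$ in the position picture, apply both operator orderings to an arbitrary test function, and let the Leibniz rule produce the single surviving term $i\hbar\,\delta_{mn}$. The only differences are cosmetic --- you work component-wise where the paper writes the vector identity $[\mathcal{R},\mathcal{P}]=i\hbar\,{\rm I}$ with ${\rm I}=\nabla\mathbf{r}$, and your attention to justifying that the multiplicative form of $\mathcal{R}$ and the differential form of $\mathcal{P}$ may be used in the same representation is, if anything, more careful than the paper's bare assertion of the eigenfunction identities in (\ref{eq19}).
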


\begin{proof}
Now, to evaluate the value of Poisson bracket $[\mathcal{R},\mathcal{P}]$ we verify its effect on an arbitrary function of position such as $f(\mathbf{r})$

\begin{equation}
\label{eq17}
[\mathcal{R},\mathcal{P}]f(\mathbf{r})=(\mathcal{R}\mathcal{P}-\mathcal{P},\mathcal{R})f(\mathbf{r})=
\mathcal{R}\mathcal{P}f(\mathbf{r})-\mathcal{P}\mathcal{R}f(\mathbf{r})
\end{equation}

\noindent
Using (\ref{eq4}) and (\ref{eq5}) we obtain

\begin{eqnarray}
\label{eq18}
\nonumber
\left[ \mathcal{R},\mathcal{P} \right]f\left( {\mathbf{r}} \right) && =  \mathcal{R}\left( { - i\hbar \frac{\partial }
{{\partial {\mathbf{r}}}}} \right)f\left( {\mathbf{r}} \right) - \left( { - i\hbar \frac{\partial }
{{\partial {\mathbf{r}}}}} \right){\mathbf{r}}f\left( {\mathbf{r}} \right)
\\ \nonumber &&
=  - i\hbar  \mathcal{R}\frac{{\partial f\left( {\mathbf{r}} \right)}}
{{\partial {\mathbf{r}}}} + i\hbar {\rm I} f\left( {\mathbf{r}} \right) + i\hbar {\mathbf{r}}\frac{{\partial f\left( {\mathbf{r}} \right)}}
{{\partial {\mathbf{r}}}}
\\ \nonumber &&
=  - i\hbar {\mathbf{r}}\frac{{\partial f\left( {\mathbf{r}} \right)}}
{{\partial {\mathbf{r}}}} + i\hbar {\rm I} f\left( {\mathbf{r}} \right) + i\hbar {\mathbf{r}}\frac{{\partial f\left( {\mathbf{r}} \right)}}
{{\partial {\mathbf{r}}}}
\\  &&
= i\hbar {\rm I} f\left( {\mathbf{r}} \right)
\end{eqnarray}

\noindent
Here, ${\rm I}=\nabla \mathbf{r}$ is the $3\times 3$ identity matrix. In the derivation of the above, we have assumed the following vector identities for the position $\mathcal{R}$ and momentum $\mathcal{P}$ operators as

\begin{eqnarray}
\label{eq19}
\mathcal{R}f(\mathbf{r})=\mathbf{r} f(\mathbf{r})
\\ \nonumber
\mathcal{P}g(\mathbf{p})=\mathbf{p} g(\mathbf{p})
\end{eqnarray}

\noindent
with $f(\cdot)$ and $g(\cdot)$ being arbitrary functions. As a matter of fact, any function of position such as $f(\mathbf{r})$ must be an eigenfunction of $\mathcal{R}$ with the vector eigenvalue $\mathbf{r}$. Similarly, any function of momentm such as $g(\mathbf{p})$ must be an eigenfunction of $\mathcal{P}$ with the vector eigenvalue $\mathbf{p}$. Now, reverting back to (\ref{eq18}) since $f(\mathbf{r})$ was assumed to be completely arbitrary, we obtain the Poisson bracket as

\begin{equation}
\label{eq20}
[\mathcal{R},\mathcal{P}]=i\hbar {\rm I}
\end{equation}

\noindent
This completes the proof.
\end{proof}

\begin{corollary}
Had we assumed an arbitrary function of momentum in the form $g(\mathbf{p})$, we would again obtain the Poisson bracket (\ref{eq20}) as $\left[ {\mathcal{R},\mathcal{P}} \right]g\left( {\mathbf{p}} \right)=i\hbar {\rm I} g\left( {\mathbf{p}} \right)$.
\end{corollary}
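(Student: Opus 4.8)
The plan is to mirror the calculation of the main theorem, but now working in the \emph{momentum} representation, where the roles of the two operators are interchanged. The key observation is which operator acts differentially on a function of momentum: by (\ref{eq4}), $\mathcal{R}$ is realized as $\mathcal{R}g(\mathbf{p}) = +i\hbar\,\partial g(\mathbf{p})/\partial\mathbf{p}$, while by the second identity of (\ref{eq19}), $\mathcal{P}$ acts multiplicatively as $\mathcal{P}g(\mathbf{p}) = \mathbf{p}\,g(\mathbf{p})$. Thus, on a function of momentum the momentum operator is diagonal and the position operator carries the derivative, exactly the dual of the position-representation situation treated in (\ref{eq18}).

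First I would expand the commutator as $[\mathcal{R},\mathcal{P}]g(\mathbf{p}) = \mathcal{R}\mathcal{P}g(\mathbf{p}) - \mathcal{P}\mathcal{R}g(\mathbf{p})$, taking care that the innermost operator acts first. For the first term, $\mathcal{P}g(\mathbf{p}) = \mathbf{p}\,g(\mathbf{p})$ and then $\mathcal{R}(\mathbf{p}\,g) = +i\hbar\,\partial_{\mathbf{p}}(\mathbf{p}\,g)$; applying the Leibniz rule, with $\partial_{\mathbf{p}}\mathbf{p} = {\rm I}$, produces an identity-matrix term $+i\hbar\,{\rm I}\,g(\mathbf{p})$ together with $+i\hbar\,\mathbf{p}\,\partial_{\mathbf{p}}g$. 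For the second term, $\mathcal{R}g(\mathbf{p}) = +i\hbar\,\partial_{\mathbf{p}}g$ and then $\mathcal{P}(+i\hbar\,\partial_{\mathbf{p}}g) = +i\hbar\,\mathbf{p}\,\partial_{\mathbf{p}}g$. Subtracting, the two $\mathbf{p}\,\partial_{\mathbf{p}}g$ contributions cancel identically, and only $+i\hbar\,{\rm I}\,g(\mathbf{p})$ survives, which is the claimed value.

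The point to verify most carefully, and the reason this corollary merits a separate statement rather than being dismissed as obvious, is the sign bookkeeping. In the position representation $\mathcal{P}$ enters with a minus sign (\ref{eq5}), whereas here $\mathcal{R}$ enters with a plus sign (\ref{eq4}); one must confirm that this opposite-sign convention, combined with the swap of which operator is multiplicative, still reproduces $+i\hbar\,{\rm I}$ and not $-i\hbar\,{\rm I}$. This is precisely the role of the negative sign inserted in (\ref{eq5}), as foreshadowed in Definition 6: it is exactly what renders the Poisson bracket representation-independent. Since $g(\mathbf{p})$ is arbitrary, stripping it off then yields $[\mathcal{R},\mathcal{P}] = i\hbar\,{\rm I}$ in agreement with (\ref{eq20}), which completes the corollary. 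The calculation is short; the only genuine obstacle is ensuring the sign and ordering conventions are deployed consistently so that the two representations agree.
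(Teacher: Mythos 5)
Your proposal is correct and follows essentially the same route as the paper's own proof: expand $[\mathcal{R},\mathcal{P}]g(\mathbf{p})$ in the momentum representation, let $\mathcal{P}$ act multiplicatively via (\ref{eq19}) and $\mathcal{R}$ act as $+i\hbar\,\partial/\partial\mathbf{p}$ via (\ref{eq4}), apply the Leibniz rule, and cancel the $\mathbf{p}\,\partial_{\mathbf{p}}g$ terms to leave $i\hbar\,{\rm I}\,g(\mathbf{p})$. Your added remarks on the sign conventions are sound commentary but do not change the substance of the argument, which matches (\ref{eq21}) step for step.
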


\begin{proof}
Proof is easily obtained by direct expansion of the operators as

\begin{eqnarray}
\label{eq21}
\nonumber
\left[ {\mathcal{R},\mathcal{P}} \right]g\left( {\mathbf{p}} \right) &&= \left( { + i\hbar \frac{\partial }
{{\partial {\mathbf{p}}}}} \right){\mathbf{p}}g\left( {\mathbf{p}} \right) - \mathcal{P}\left( { + i\hbar \frac{\partial }
{{\partial {\mathbf{p}}}}} \right)g\left( {\mathbf{p}} \right)
\\ \nonumber &&
=  + i\hbar {\mathbf{p}}\frac{{\partial g\left( {\mathbf{p}} \right)}}
{{\partial {\mathbf{p}}}} + i\hbar {\rm I} g\left( {\mathbf{p}} \right) - i\hbar \mathcal{P}\frac{{\partial g\left( {\mathbf{p}} \right)}}
{{\partial {\mathbf{p}}}}
\\ \nonumber &&
=  + i\hbar {\mathbf{p}}\frac{{\partial g\left( {\mathbf{p}} \right)}}
{{\partial {\mathbf{p}}}} + i\hbar {\rm I} g\left( {\mathbf{p}} \right) - i\hbar {\mathbf{p}}\frac{{\partial g\left( {\mathbf{p}} \right)}}
{{\partial {\mathbf{p}}}}
\\  &&
= i\hbar {\rm I} g\left( {\mathbf{p}} \right)
\end{eqnarray}

\noindent
This will result in the same relationship as (\ref{eq20}). 
\end{proof}

It is customary to write the tensor Poisson bracket as

\begin{equation}
\label{eq22}
[\mathcal{R}_m,\mathcal{P}_n]=i\hbar \delta_{mn}
\end{equation}

\noindent
with $\delta_{mn}$ being Kronecker delta.

\subsection{Generalization} \label{Sec3.1}
Treatment of the more general choice of $f(\mathbf{r},\mathbf{p})$ is slightly more difficult, but will again result in the same conclusion of (\ref{eq20}). For this purpose, we first define the Weyl symmetrization operator.\\

\textit{Definition 12. Weyl Symmetrization Operator.} The Weyl symmetrization operator \cite{Schleich, Castellani, Hatzinikitas} denoted by $\textsf{S}$ is recursively defined, operating as

\begin{eqnarray}
\label{eq25}
\textsf{S}\{\mathcal{AB}\}&&=\frac{1}{2}(\mathcal{AB}+\mathcal{BA})
\\ \nonumber
\textsf{S}\{\mathcal{ABC}\}&&=\frac{1}{3}
(\mathcal{A}\textsf{S} \{\mathcal{BC}\}+\mathcal{B}\textsf{S}\{\mathcal{CA}\}+\mathcal{C}\textsf{S} \{\mathcal{BA}\})
\end{eqnarray}

\noindent
and so on.

\begin{unnumberedtheorem}
The expression for Poisson bracket (\ref{eq20}) holds true for the case of general choice of $f(\mathbf{r},\mathbf{p})$.
\end{unnumberedtheorem}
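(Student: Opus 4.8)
The plan is to extend the single-monomial computation from the previous theorem to a general function $f(\mathbf{r},\mathbf{p})$ by reducing everything to the canonical commutator $[\mathcal{R}_m,\mathcal{P}_n]=i\hbar\delta_{mn}\mathrm{I}$ already established in (\ref{eq20})--(\ref{eq22}), together with the component commutation assumptions (\ref{eq16}). The key structural fact I would exploit is that the commutator is a derivation: for any operators one has the Leibniz rule $[\mathcal{A},\mathcal{BC}]=[\mathcal{A},\mathcal{B}]\mathcal{C}+\mathcal{B}[\mathcal{A},\mathcal{C}]$. Once this is in hand, the value of $[\mathcal{R}_m,\mathcal{P}_n]$ on an arbitrary symmetrized polynomial in the operators propagates purely through the elementary commutators, so no new information beyond (\ref{eq22}) is ever needed.

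First I would make precise what ``general choice of $f(\mathbf{r},\mathbf{p})$'' means at the operator level: since $\mathcal{R}$ and $\mathcal{P}$ do not commute, an arbitrary classical function $f$ must be promoted to an operator via the Weyl symmetrization operator $\textsf{S}$ of Definition 12, giving a well-defined Hermitian operator $\textsf{S}\{f(\mathcal{R},\mathcal{P})\}$. I would then argue that it suffices to verify the claim on monomials $\textsf{S}\{\mathcal{R}_{i_1}\cdots\mathcal{R}_{i_a}\mathcal{P}_{j_1}\cdots\mathcal{P}_{j_b}\}$, because any sufficiently regular $f$ admits a power-series (or Fourier) expansion in its arguments and the commutator is linear, so linearity extends the monomial result to the full function. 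The heart of the computation is then to apply the Leibniz rule repeatedly to strip one factor at a time: each time the derivation acts on an $\mathcal{R}_k$ or $\mathcal{P}_k$ factor it produces either $0$ (by (\ref{eq16})) or a Kronecker delta (by (\ref{eq22})), collapsing the whole expression to $i\hbar$ times a derivative-like sum of lower-order symmetrized monomials. I would verify on a base case such as $\textsf{S}\{\mathcal{RP}\}=\tfrac12(\mathcal{R}\mathcal{P}+\mathcal{P}\mathcal{R})$ that the symmetrization is exactly what makes the induction close cleanly, reproducing the classical relation $[\mathcal{R}_m, f]=i\hbar\,\partial f/\partial \mathcal{P}_m$ at the operator level.

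The induction I would set up runs on the total degree of the monomial, using (\ref{eq25}) to peel off one operator and the derivation property to push the commutator inside $\textsf{S}$. The subtlety to watch is that $\textsf{S}$ does not commute with multiplication, so one cannot naively write $[\mathcal{R}_m,\textsf{S}\{AB\}]=\textsf{S}\{[\mathcal{R}_m,AB]\}$ without justification; the clean way is to expand $\textsf{S}$ into its explicit average over orderings, apply the Leibniz rule term by term on each ordered product, and then re-collect the resulting lower-degree terms back into symmetrized form. Demonstrating that this re-collection is consistent — that the average of the commutators of the orderings is again a symmetrized object of one lower degree — is the combinatorial bookkeeping that makes the argument go through.

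The main obstacle I expect is precisely this interplay between symmetrization and the derivation property: showing rigorously that $[\mathcal{R}_m,\textsf{S}\{f\}]=i\hbar\,\textsf{S}\{\partial f/\partial \mathcal{P}_m\}$ (and the conjugate statement for $[\mathcal{P}_n,\textsf{S}\{f\}]$) for every degree, rather than just checking a low-order example. Convergence of the power-series expansion for a truly arbitrary $f$ is a genuine analytic gap, so I would either restrict to $f$ that is entire (or polynomial, with the general case by density) or invoke the Fourier representation $f(\mathbf{r},\mathbf{p})=\int \hat f(\boldsymbol\alpha,\boldsymbol\beta)\,\textsf{S}\{e^{i(\boldsymbol\alpha\cdot\mathcal{R}+\boldsymbol\beta\cdot\mathcal{P})}\}\,d\boldsymbol\alpha\,d\boldsymbol\beta$ and compute the commutator on the exponential Weyl operators directly, where the Baker--Campbell--Hausdorff structure makes the action transparent. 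Granting the derivation identity, the final conclusion that the underlying Poisson bracket $[\mathcal{R},\mathcal{P}]$ still evaluates to $i\hbar\,\mathrm{I}$ is immediate, since all of the structure ultimately rests on the single seed relation (\ref{eq22}).
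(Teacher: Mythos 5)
Your proposal is sound and reaches the right conclusion, but it takes a genuinely different route from the paper. The paper shares your setup (Taylor expansion of $f$ plus the Weyl symmetrization (\ref{eq25}) to define $f(\mathcal{R},\mathcal{P})$), but its engine is the pair of double-commutator identities $\left[\left[\mathcal{R},\mathcal{P}\right],\mathcal{R}\right]=\left[\left[\mathcal{R},\mathcal{P}\right],\mathcal{P}\right]=0$ of (\ref{eq26}): it slides the bracket through each symmetrized monomial term by term, concludes that $\left[\mathcal{R},\mathcal{P}\right]$ commutes with every $f(\mathcal{R},\mathcal{P})$, declares it therefore a constant, and pins the constant down by comparison with (\ref{eq20}). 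This centrality argument never needs to re-collect commutators into symmetrized form, so it sidesteps entirely the combinatorial bookkeeping you correctly identify as the crux of your approach. Your route, by contrast, proves the strictly stronger derivation identity $[\mathcal{R}_m,\textsf{S}\{f\}]=i\hbar\,\textsf{S}\{\partial f/\partial \mathcal{P}_m\}$ by induction on degree via the Leibniz rule; this is more work (the re-collection into symmetrized form succeeds precisely because the elementary commutators (\ref{eq22}) are central and can be pulled out of each ordered product), but it delivers the full operator analogue of the classical Poisson-bracket structure rather than mere centrality, and you also flag the convergence of the power series, which the paper ignores. Both arguments share the same soft spot at the very end: the paper's step from ``$\left[\mathcal{R},\mathcal{P}\right]$ commutes with all $f(\mathcal{R},\mathcal{P})$'' to ``$\left[\mathcal{R},\mathcal{P}\right]=\mathrm{const}$'' is an unjustified Schur-type (irreducibility) claim, while your closing sentence that the conclusion ``is immediate'' leans on the seed relation (\ref{eq22}) already being a scalar multiple of the identity; yours is the slightly firmer of the two, but an explicit final line stating that a central scalar acts by multiplication on every $\textsf{S}\{f\}$ would close the argument cleanly.
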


\begin{proof}
Generalization is obtained by expanding $f(\mathbf{r},\mathbf{p})$ first as

\begin{equation}
\label{eq23}
f(\mathbf{r},\mathbf{p}) = \sum\limits_{m,n = 0}^\infty  \frac{\mathbf{r}^n \mathbf{p}^m }
{n!m!} \cdot \frac{\partial ^{n + m} f(\mathbf{r},\mathbf{p})}
{\partial \mathbf{r}^n \partial \mathbf{p}^m } |_{\mathbf{r},\mathbf{p} = \mathbf{0}}
\end{equation}

\noindent
which allows us to appropriately define the operator

\begin{equation}
\label{eq24}
f(\mathcal{R},\mathcal{P}) = \sum\limits_{m,n = 0}^\infty  \frac{\partial ^{n + m} f(\mathbf{r},\mathbf{p})}
{\partial \mathbf{p}^n \partial \mathbf{p}^m } |_{\mathbf{r},\mathbf{p} = \mathbf{0}} \cdot
\frac{\textsf{S}\{\mathcal{R}^n \mathcal{P}^m \}}{n!m!}
\end{equation}

\noindent
It is now possible to use (\ref{eq18}) and (\ref{eq21}) to obtain the identities

\begin{eqnarray}
\label{eq26}
\left[ \left[\mathcal{R},\mathcal{P} \right],\mathcal{R}\right]=0
\\ \nonumber
\left[ \left[\mathcal{R},\mathcal{P} \right],\mathcal{P}\right]=0
\end{eqnarray}

\noindent
from which we may obtain

\begin{eqnarray}
\label{eq27}
\left[\mathcal{R},\mathcal{P} \right]
f(\mathcal{R},\mathcal{P}) = &&\sum\limits_{m,n = 0}^\infty  \frac{1}{n!m!}\frac{\partial ^{n + m} f(\mathbf{r},\mathbf{p})}
{\partial \mathbf{p}^n \partial \mathbf{p}^m } |_{\mathbf{r},\mathbf{p} = \mathbf{0}} \\
\nonumber &&
\left[\mathcal{R},\mathcal{P} \right]\textsf{S}\{\mathcal{R}^n \mathcal{P}^m \}\\
\nonumber &&
= \sum\limits_{m,n = 0}^\infty  \frac{1}{n!m!}\frac{\partial ^{n + m} f(\mathbf{r},\mathbf{p})}
{\partial \mathbf{p}^n \partial \mathbf{p}^m } |_{\mathbf{r},\mathbf{p} = \mathbf{0}} \\
\nonumber &&
\textsf{S}\{\mathcal{R}^n \mathcal{P}^m \}\left[\mathcal{R},\mathcal{P} \right]
\end{eqnarray}

\noindent
Therefore, we can readily write down that
\begin{equation}
\label{eq28}
\left[\left[\mathcal{R},\mathcal{P} \right],f(\mathcal{R},\mathcal{P})\right]=\left[f(\mathcal{R},\mathcal{P}),\left[\mathcal{R},\mathcal{P} \right]\right]
\end{equation}

\noindent
which in turn results in $\left[\mathcal{R},\mathcal{P} \right]={\rm const}$. Therefore, comparing to (\ref{eq20}), we obtain the same generalized form of Poisson bracket, that is $\left[\mathcal{R},\mathcal{P} \right]=i\hbar{\rm I}$.

\end{proof}

Similarly, applying the same calculations to the conjugate evolution equations (\ref{eq3}) and (\ref{eq13}) we obtain a comparable commutator between energy $\mathcal{H}$ and time $\mathcal{T}$ operators as

\begin{equation}
\label{eq29}
\left[\mathcal{H},\mathcal{T} \right]=i\hbar
\end{equation}

\subsection{Uncertainty Relationships}

Following the standard procedure, once the commutators between two operators such as $\mathcal{A}$ and $\mathcal{B}$ is known, then it can be shown that \cite{Sakurai,Schleich,Khorasani}

\begin{equation}
\label{eq30}
\Delta a \Delta b \geq \frac{1}{2}|\langle\left[\mathcal{A},\mathcal{B}\right]\rangle|
\end{equation}

\noindent
where the standard deviations, or uncertainties $\Delta a$ and $\Delta b$ are given by
\begin{eqnarray}
\label{eq31}
\Delta a=\sqrt{\langle\mathbb{A}^2\rangle-\langle\mathbb{A}\rangle^2} \\ \nonumber
\Delta b=\sqrt{\langle\mathbb{B}^2\rangle-\langle\mathbb{B}\rangle^2}
\end{eqnarray}

\noindent
Here, expectation values are taken with respect to a given state function such as $\psi(\mathbf{r})=\langle\mathbf{r}|\psi\rangle$ or $\chi(\mathbf{p})=\langle\mathbf{p}|\chi\rangle$ as

\begin{eqnarray}
\label{eq32}
\langle\mathbb{A}\rangle&&=\iiint \psi^{*}(\mathbf{r})\mathcal{A}\psi(\mathbf{r})d^{3}r
\\ \nonumber
&&=\iiint \chi^{*}(\mathbf{p})\mathcal{A}\chi(\mathbf{p})d^{3}p
\end{eqnarray}

\noindent
Anyhow, we readily recover the famous uncertainty relationships

\begin{eqnarray}
\label{eq33}
\Delta \mathbf{r}\cdot\Delta\mathbf{p}\geq\frac{3}{2}\hbar
\\ \nonumber
\Delta E\cdot\Delta t\geq\frac{1}{2}\hbar
\end{eqnarray}

\noindent
where we have taken note of the fact that

\begin{equation}
\label{eq34}
\Delta \mathbf{r}\cdot\Delta\mathbf{p}=\Delta x\Delta p_x+\Delta y\Delta p_y+\Delta z\Delta p_z
\end{equation}

\section{Conclusion}

In this article, we have demonstrated that the non-commutative algebra of quantum mechanics, and in particular, the Poisson bracket can be indeed derived, instead of being postulated. However, this would rely on modifying the basic postulates of quantum mechanics and shifting to more basic concepts starting from evolution equations and conjugate variables.

\appendix
\section{Operators in Ket and Function Spaces}

Any operator acting on any member of the ket space such as $\mathbb{A}|\psi\rangle=|\phi\rangle$ relates to a dual operator such as $\mathcal{A}$ acting on the functions $\psi(a)$ or obeys a similar relationship such as $\mathcal{A}\psi(a)=\phi(a)$, where $\psi(a)=\langle a|\psi\rangle$ and $\phi(a)=\langle a|\phi\rangle$. This is called a projection of ket space unto $a$ space of functions. Although normally $\mathbb{A}$ and $\mathcal{A}$ refer to the same physical quantities they are not mathematically the same. While $\mathbb{A}$ operates in the ket sapce, $\mathcal{A}$ operates in the function space. These two however are transformable using the second quantization as \cite{Khorasani}

\begin{equation}
\label{A1}
\mathbb{A} = \iiint {\hat \psi ^\dag  \left( {\mathbf{r}} \right)\mathcal{A}\hat \psi \left( {\mathbf{r}} \right)d^3 r}
\end{equation}

\noindent
where field operators are given as

\begin{eqnarray}
\label{A2}
\hat \psi \left( {\mathbf{r}} \right) = \sum\limits_{\left( n \right)} {\hat b_{\left( n \right)} \varphi _{\left( n \right)} \left( {\mathbf{r}} \right)}\\ \nonumber
\hat \psi ^\dag  \left( {\mathbf{r}} \right) = \sum\limits_{\left( n \right)} {\hat b_{\left( n \right)} ^\dag  \varphi _{\left( n \right)} ^* \left( {\mathbf{r}} \right)}
\end{eqnarray}

\noindent
Here, $\hat b_{\left( n \right)}$ and $\hat b_{\left( n \right)} ^\dag$ are respectively the annihilation and creation operators at the $(n)$th state. Also
$\varphi _{(n)}({\mathbf{r}})=\langle \mathbf{r}|(n)\rangle$, where $|(n)\rangle$ is an eigenket of the system. Obviously, this is a projection unto the three-dimensional position $\mathbf{r}$, while any other projection such as unto the three-dimensional momentum $\mathbf{p}$ could have been also used.

\section{Time Operator and Time Eigenstates}

For a given physical system under consideration, we may assume a time operator in the ket space here denoted by $\mathbb{T}$. In general, $\mathbb{T}$ can be a function of Hamiltonian, coordinates and momentum as $\mathbb{T=T(H,R,P)}$. Hence, we may recast the corresponding evolution equation in the ket space as

\begin{equation}
\label{B1}
\mathbb{T(H,R,P)}|\chi(e)\rangle=-i\hbar\frac{\partial}{\partial e}|\chi(e)\rangle
\end{equation}

\noindent
where $|\chi(e)\rangle$ is the state ket of the system. If the time operator is independent of the total energy, or Hamiltonian, then

\begin{equation}
\label{B2}
\mathbb{T(R,P)}|\chi_{t}\rangle=t|\chi_t\rangle
\end{equation}

\noindent
in which $t$ is the time eigenvalue, and $|\chi_t\rangle$ is the time eigenstate. We may notice that $|\chi_t\rangle$ and $|\chi(e)\rangle$ are related as

\begin{equation}
\label{B3}
|\chi(e)\rangle=\exp(\frac{i}{\hbar}te)|\chi_{t}\rangle
\end{equation}

To illustrate a basic example, consider the operator

\begin{equation}
\label{B4}
\mathbb{K}=\frac{1}{2\hbar^2\Omega^2}\mathbb{H}^2+\frac{\Omega^2}{2}\mathbb{T}^2
\end{equation}

\noindent
Here, $\Omega$ is a real-valued non-zero constant. Then, we may define the non-Hermitian ladder operators

\begin{eqnarray}
\label{B5A}
\hat{b}=\frac{1}{\sqrt{2\hbar}}(\frac{1}{\sqrt{\hbar}\Omega}\mathbb{H}+i\sqrt{\hbar}\Omega\mathbb{T})
\\
\label{B5B}
\hat{b}^\dagger=\frac{1}{\sqrt{2\hbar}}(\frac{1}{\sqrt{\hbar}\Omega}\mathbb{H}-i\sqrt{\hbar}\Omega\mathbb{T})
\end{eqnarray}

\noindent
Then, it would be easy to verify the identity $[\hat{b},\hat{b}^\dagger]=1$, and thus

\begin{equation}
\label{B6}
\mathbb{K}=\hat{b}^\dagger\hat{b}+\frac{1}{2}
\end{equation}

\noindent
Evidently, the operators $\mathbb{K}$, $\hat{b}$ and $\hat{b}^\dagger$ obey the algebra

\begin{eqnarray}
\label{B7A}
\mathbb{K}|m\rangle=(m+\frac{1}{2})|m\rangle
\\
\label{B7B}
\hat{b}|m\rangle=\sqrt{m}|m-1\rangle
\\
\label{B7C}
\hat{b}^\dagger|m\rangle=\sqrt{m+1}|m+1\rangle
\end{eqnarray}

\noindent
Then, the time representation in the function space will be given by $\chi_m(e)=\langle e |m\rangle$. Similarly, the energy representation will be given by $\phi_m(t)=\langle t|m\rangle$.


\end{document}